\documentclass[times,twocolumn,final]{IEEEtran}

\usepackage{framed,multirow}
\usepackage[numbers]{natbib}
\usepackage{amssymb}
\usepackage{amsmath}
\usepackage{amsthm}
\usepackage{graphicx}
\usepackage{amsmath}
\usepackage{amssymb}
\usepackage{booktabs}
\usepackage{mathtools}
\usepackage[linesnumbered,ruled,vlined]{algorithm2e}
\usepackage{latexsym}
\usepackage{graphicx}
\usepackage{multirow}
\usepackage{adjustbox}
\usepackage{wrapfig}
\usepackage{makecell}
\usepackage{subcaption}
\usepackage{xcolor}
\usepackage{caption}
\usepackage{graphicx}
\usepackage{url}
\usepackage{xcolor}
\usepackage{hyperref}
\usepackage{color, colortbl}
\definecolor{Gray}{gray}{0.9}
\definecolor{LightCyan}{rgb}{0.88,1,1}
\usepackage{comment}

\usepackage[capitalize]{cleveref}
\crefname{section}{Sec.}{Secs.}
\Crefname{section}{Section}{Sections}
\Crefname{table}{Table}{Tables}
\crefname{table}{Tab.}{Tabs.}
\newtheorem{prop}{Proposition}

\begin{document}

\title{UNICORN: Ultrasound Nakagami Imaging via Score Matching and Adaptation for Assessing Hepatic Steatosis}                   

\author{Kwanyoung Kim$^{*1}$, \;  Jaa-Yeon  Lee$^{*2}$, \\ \; Youngjun Ko$^{3}$,  GunWoo Lee$^{3}$, \;
Jong Chul Ye$^{\dagger 1}$ \\
\thanks{$^{*}$: Co-first authors with equal contribution. $^{\dagger}$: Corresponding authors. \\
$^{1}$Department of AI Convergence, GIST, $^{2}$Kim Jae Chul Graduate School of AI, KAIST, $^{3}$Samsung Medison, Email: jong.ye@kaist.ac.kr.}}

\maketitle

\begin{abstract}
Ultrasound imaging is an essential first-line tool for assessing hepatic steatosis. While conventional B-mode ultrasound imaging has limitations in providing detailed tissue characterization, ultrasound Nakagami imaging holds promise for visualizing and quantifying tissue scattering in backscattered signals, with potential applications in fat fraction analysis. However, existing methods for Nakagami imaging struggle with optimal window size selection and suffer from estimator instability, leading to degraded image resolution.
To address these challenges, we propose a novel method called UNICORN (\textbf{U}ltrasound \textbf{N}akagami \textbf{I}maging via S\textbf{cor}e Matching and Adaptatio\textbf{n}), which offers an accurate, closed-form estimator for Nakagami parameter estimation based on the score function of the ultrasound envelope signal. Unlike methods that visualize only specific regions of interest (ROI) and estimate parameters within fixed window sizes, our approach provides comprehensive parameter mapping by providing a pixel-by-pixel estimator, resulting in high-resolution imaging. We demonstrated that our proposed estimator effectively assesses hepatic steatosis and provides visual distinction in the backscattered statistics associated with this condition. Through extensive experiments using real envelope data from patient, we validated that UNICORN enables clinical detection of hepatic steatosis and exhibits robustness and generalizability.
\end{abstract}

\section{Introduction}

Hepatic steatosis, characterized by the accumulation of excessive fat within liver cells forming fatty vacuoles~\cite{kawano2013mechanisms}, can advance to more severe conditions such as nonalcoholic steatohepatitis, fibrosis, cirrhosis, and hepatocellular carcinoma~\cite{adams2005recent,targher2010risk,zoller2016nonalcoholic}. Nonalcoholic fatty liver disease (NAFLD), primarily driven by hepatic steatosis, has emerged as the foremost cause of chronic liver disease~\cite{mishra2012epidemiology}. Consequently, accurate diagnosis and assessment of hepatic steatosis are crucial to preventing the progression of liver diseases, thereby enhancing opportunities for curative treatment and improving patient survival.

Liver biopsy is regarded as the gold standard for diagnosing hepatic steatosis~\cite{kleiner2005design,bravo2001liver}. However, this method has notable limitations, such as sampling error and invasive side effects~\cite{ratziu2005sampling,sumida2014limitations}. Consequently, noninvasive imaging techniques have emerged as valuable alternatives for the quantitative diagnosis of hepatic steatosis.~\cite{ma2009imaging} Among these, magnetic resonance imaging-proton density fat fraction (MRI-PDFF) stands out as a reliable diagnostic tool. Nevertheless, MRI-PDFF is not used for routine clinical assessments due to its high cost and limited availability~\cite{reeder2012proton}.
Ultrasound B-mode imaging presents an essential alternative for liver steatosis by providing diagnostic clues with properties that are safe, real-time, noninvasive, and cost-effective~\cite{schwenzer2009non,thijssen2008computer}. However, the assessment of hepatic steatosis using B-mode imaging is prone to inter-clinician variability and is limited in the tissue information. 

To overcome this limitation, quantitative ultrasound (QUS) methods have emerged to analyze statistical features of ultrasound radiofrequency (RF) echo signals in the liver, including tissue backscatter~\cite{oelze2016review} and ultrasonic wave attenuation~\cite{bigelow2005estimation}. QUS enables finer distinctions among tissue types beyond the capabilities of B-mode imaging alone. The backscattered ultrasound signals, specifically the envelope statistics, encapsulate the characteristics of scatterers within tissue, such as their shape, size, density, and other properties~\cite{bamber1981acoustic,insana1990describing}. Consequently, QUS methods utilizing these signals have been explored to visualize the scatter properties of ultrasound within tissue~\cite{tsui2007imaging}.

Initially, the Rayleigh distribution was employed to model backscattered signal data. However, scatterers in tissue exhibit varied scattering patterns, ranging from pre-Rayleigh to post-Rayleigh distributions~\cite{shankar1995model,dutt1994ultrasound}. To better account for these variations, the Nakagami distribution has been studied as a more general statistical model.
The Nakagami parameter, estimated from backscattered echoes, can characterize various backscattering distributions in medical ultrasound~\cite{zhang2012feasibility}. This parameter has been successfully used in ultrasound parametric imaging to assess hepatic steatosis, demonstrating a significant correlation between tissue scattering imaging results and MRI-PDFF measurements~\cite{ho2013relationship,lin2015noninvasive,wan2015effects}. 

Previous Nakagami imaging studies in ultrasound commonly use moment-based and maximum-likelihood estimators (MLE) for mapping Nakagami distribution parameters. In the moment-based approach, Nakagami parameters are computed within a sliding local window, which is typically optimized to be three times the transducer pulse length for optimal performance~\cite{tsui2007imaging}. Conversely, MLE-based Nakagami imaging provides more consistent results with smaller variances~\cite{cheng2001maximum}. Window-modulated compounding (WMC) Nakagami imaging utilizes a momentum estimator with varying local window sizes to enhance image smoothness~\cite{tsui2014window}. However, both moment-based and MLE rely on the sliding window technique, which involves a trade-off between image resolution and estimator stability~\cite{larrue2011nakagami}. Larger window sizes improve image smoothness but reduce resolution, while smaller window sizes offer finer resolution at the expense of stability. Therefore, selecting the optimal window size is crucial and depends heavily on the specific hyperparameters used.
%

To address these issues, here we introduce a novel framework for \textbf{U}ltrasound \textbf{N}akagami \textbf{i}maging via S\textbf{cor}e Matching and Adapatio\textbf{n}, termed UNICORN. 
Inspired by the success of the self-supervised denoising approach that utilizes the score function, i.e. the gradient of log-likelihood \cite{kim2021noise2score,kim2022noise},
this framework offers a closed-form solution for mapping Nakagami parameters in terms of the {\em score function} of the measurement. Specifically, instead of using momentum and MLE, our estimator calculates the posterior mean and provides an optimal solution from a Bayesian perspective. By integrating the score function of RF envelope signals, UNICORN directly computes Nakagami images, eliminating the need for the sliding window technique. Consequently, our proposed technique preserves ultrasound imaging resolution in Nakagami imaging while ensuring stability. Our contributions can be summarized as:

\begin{itemize}
\item To the best of our knowledge, we present the first closed-form solution for estimating the Nakagami parameter using the score function of RF envelope data, which addresses limitations of conventional methods.
\item Through extensive experiments including simulation and real RF envelope data, our proposed method, called UNICORN, demonstrates its superiority in estimation performance and the distinction between normal and fatty liver. 
\item Extensive analysis on a real-patient dataset validated the clinical effectiveness of our proposed Nakagami imaging method in diagnosing hepatic steatosis.
\end{itemize}

\begin{figure*}[t]
  \begin{center}
    \includegraphics[width=1\textwidth]{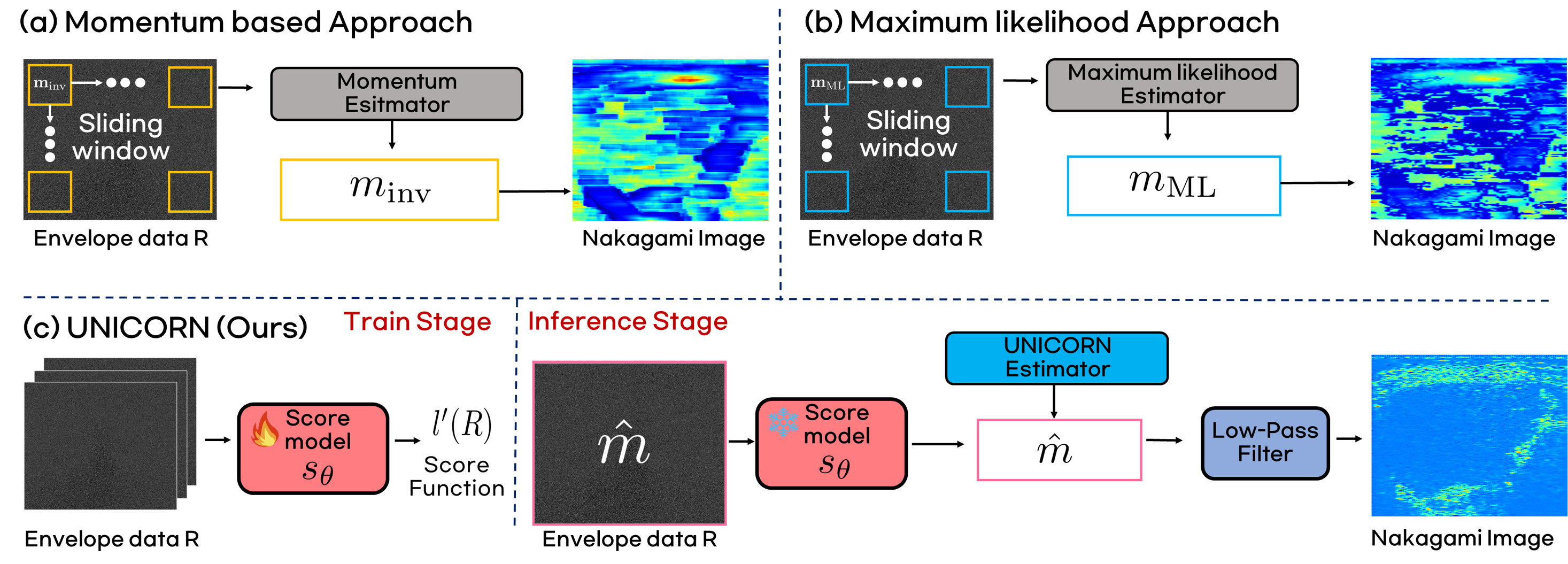}
  \end{center}
  \caption{
Nakagami imaging using conventional methods and our UNICORN framework.
(a) Momentum-based approach uses a sliding window for Nakagami parameter calculation.
(b) Maximum likelihood method obtains Nakagami image through ML estimation.
(c) UNICORN consists of two stages: training a score model to learn RF envelope score function, and inference step estimates Nakagami image in terms of score function.}
  \label{fig:main}
\end{figure*}

\section{Related Work}
\subsection{Quantitative Assessment for Hepatic Steatosis}

Recently, a number of quantitative ultrasound (QUS) techniques have been developed and validated for diagnosing hepatic steatosis by exploiting the intrinsic acoustic properties of tissue to characterize its microstructure \cite{li2024correlation, rou2024assessment, ghoshal2012ex, zhou2019hepatic}. These approaches typically measure the attenuation coefficient, backscatter coefficient, and sound‑speed, all of which are sensitive to hepatic fat content~\cite{ferraioli2021quantification, jang2023non}. In fatty liver, increased scattering leads to a higher backscatter coefficient compared with normal liver tissue \cite{park2022quantitative}. To detect early steatosis, both parametric backscatter analyses\cite{sato2021fatty} and non‑parametric statistical methods~\cite{lin2024clinical, chan2021ultrasound} have been proposed. Parametric backscatter models are often expressed with generalized probability‑density functions that encompass backscattering distributions such as the homodyned‑K \cite{dutt1994ultrasound} and Nakagami \cite{tsui2016application, wan2015effects} models. Numerous experimental studies have reported a positive correlation between backscatter metrics and steatosis grade \cite{zagzebski1993quantitative, bamber1981acoustic, zhou2018hepatic}. Among the many QUS techniques, this work concentrates on Nakagami‑based ultrasound imaging for the assessment of hepatic steatosis.

\subsection{Ultrasound Nakagami Imaging}

The Nakagami distribution is widely used in speckle analysis to characterize the statistical properties of tissue scattering in ultrasound imaging~\cite{ho2012using, zhang2012feasibility}. It offers a simple moments‑based method for estimating the shape parameter $m$
, which is easy to implement and yields low bias. However, MLE estimators provides a lower‑variance estimate of $m$
~\cite{destrempes2023review}. To reduce the error of MLE‑based $m$
 estimates, several approximations have been proposed, including the Tolparev‑Polyakov, Lorenz, Greenwood‑Durand, and recursive Bowman estimators~\cite{kolar2004estimator}. The WMC technique further improves the smoothness of parametric maps by fusing results from different window sizes, thereby balancing spatial resolution and variance~\cite{tsui2014window, zhou2018three}. In addition, coarse‑to‑fine approaches that employ pyramidal layers effectively preserve details in heterogeneous tissues~\cite{han2017nakagami}.

Ultrasound Nakagami imaging has been applied to a variety of tissue‑characterization tasks,  including liver fibrosis~\cite{ho2012using, tsui2016application, fang2020ultrasound}, thermal lesions~\cite{rangraz2014nakagami, zhang2012feasibility}, and breast masses~\cite{shankar2001classification, tsui2010ultrasonic}.  The technique has shown particular promise for fatty‑liver assessment through backscatter‑statistics analysiss~\cite{ho2013relationship, wan2015effects}. One notable study used the Nakagami shape parameter to quantify liver‑tissue echo‑amplitude statistics, distinguishing pre‑Rayleigh, Rayleigh, and post‑Rayleigh regimes~\cite{tsui2016acoustic}. This work highlights the potential of Nakagami imaging for a more accurate evaluation of hepatic steatosis.

Nevertheless, most previous studies relied on existing estimators (moments‑based or MLE). In contrast, we introduce a novel estimator based on the score function and posterior mean. To the best of our knowledge, this is the first such estimator proposed, and we validate its performance using real-patient data.

\section{Background}
\subsection{Nakagami Distribution}
The probability density function of the ultrasound backscattered envelope $R$ under the Nakagami statistical model can be described as follows:
\begin{align}
    p_{R}(r|m,\Omega) = \frac{2}{\Gamma(m)}\left(\frac{m}{\Omega}\right)^{m}r^{2m-1}\exp\left(-\frac{m}{\Omega}r^2\right)\mathcal{U}(r) 
    \label{pd:naka}
\end{align}
where $r$ is a value of the random variable $R$,
$\Gamma(\cdot)$ and $\mathcal{U}(\cdot)$ represent the Gamma function and the unit step function, respectively. $\Omega$ denotes the scale parameter,
while $m$ is the Nakagami parameter or shape parameter that determines the statistical distribution of the ultrasound backscattered envelope~\cite{nakagami1960m}. Specifically, a Nakagami parameter ranging from 0 to 1 indicates a transition from a pre-Rayleigh to a Rayleigh distribution, while a parameter larger than 1 implies that the statistics of the backscattered signal conform to post-Rayleigh distributions.

\subsection{Momentum Estimators}
The momentum estimators are classical methods used for estimating parameters in statistical methods. Nakagami~\cite{nakagami1960m} provides the momentum estimator for the Nakagami parameter $m$ as follows:
\begin{align}
    m_{\text{inv}} = \frac{[\mathbb{E}(R^2)]^2}{\mathbb{E}[R^2 - \mathbb{E}(R^2)]^2}, \quad \hat \Omega = \mathbb{E}[R^{2}], \label{m_estimator}
\end{align}
where $m_{\text{inv}}$ denotes the esimated Nakagmi paramter by the momentum estimator.
The second moment of the Nakagami distribution is $\Omega$. Therefore, $\hat \Omega$ is an unbiased estimator for $\Omega$.
By employing the momentum estimators, ultrasound Nakagami imaging can be obtained by adopting the sliding window algorithm~\cite{tsui2007imaging,tsui2010ultrasonic}. In other words, the local Nakagami parameter is estimated within a square window of a specific size and assigned to the new pixel located at the center of the window. This process is repeated until the window covers the entire envelope image by sliding across it, as illustrated in \cref{fig:main} (a).

\subsection{Maximum Likelihood Estimator}
In MLE approach~\cite{cheng2001maximum}, the Nakagami parameter $m_{\text{ML}}$ is determined by maximizing the likelihood function. Let $R_1$, $\cdots$, $R_N$ be random variables which are i.i.d according to \cref{pd:naka}. The log-likelihood function of the independent multivariate Nakagami distribution is given by:
\begin{align}
   \mathcal{L}(m,\Omega|r) =-N \log(\Gamma(m)) + N m \log(m) - N m\log(\Omega) \\ \nonumber
    + (2m-1) \sum^{N}_{i=1} \log(r_i) -\frac{m}{\Omega} \sum^{N}_{i=1} r_i^2
\end{align}
where $\{r_i, i = 1,\cdots,N \}$ are the samples of $\{R_i, i = 1,\cdots,N \}$.
By differentiating the log-likelihood with respect to $m$ and setting it equal to zero, we obtain the following expression:
\begin{align}
\log(m) - \psi(m) = \log \left(\frac{1}{N} \sum^{N}_{i=1} r_i^2\right) - \frac{1}{N} \sum^{N}_{i=1} \log(r_i^2) \label{mle}
\end{align}
where $\psi(m)$ is the digamma function, which is the derivative of the logarithm of the Gamma function. To solve nonlinear ~\cref{mle}, the Taylor approximation is adopted, $\psi(m) \approx \log(m) -(1/2m)$ :
\begin{align}
    m_{\text{ML}} = \frac{1 
    }{2 (\log \left(\frac{1}{N} \sum^{N}_{i=1} r_i^2\right) - \frac{1}{N} \sum^{N}_{i=1} \log(r_i^2)))} \label{mle_estimator}
\end{align}
where $m_{\text{ML}}$ denotes the esimated Nakagmi paramter using ML estimator. Similar to the momentum estimator,
the ML estimator also adopts a sliding window approach to calculate the locally estimated Nakagami parameter as shown in \cref{fig:main} (b).

\subsection{Window modulated Compound Estimator}

In WMC estimator~\cite{tsui2014window}, the Nakagami parameter $m_{com}$ is calculated by averaging $K$ different Nakagami parameters $\{m_i, i = 1,\cdots,K \}$, each obtained from a momentum estimator with a different window size \cref{m_estimator}:
\begin{align}
    m_{com}=\frac{1}{K}\sum^{K}_{i=1}m_i
\end{align}
where $K$ denotes the number of different window size, $m_{com}$ denotes the Nakagami parameter estimated by WMC estimator.

\section{Methods}

Recent works of Noise2Score \cite{kim2021noise2score,kim2022noise} provided a highly
efficient closed form formula for self-supervised image denoising using Tweedies' formula that utilizes the score function, i.e. the gradient
of loglikelihood of the measurement.
Inspired by this, here we introduce a novel framework for Ultrasound Nakagami Imaging with Score Matching and Adaptation, termed UNICORN. UNICORN consists of two steps: firstly, we learn the score function of the ultrasonic envelope data via denoising score matching loss. Then, we estimate the Nakagami parameter $m$ per pixel, followed by a low-pass filter, and reconstruct the Nakagami Imaging, as illustrated in \cref{fig:main} (c).
In Section \ref{sec:estimator}, we provide details of our proposed estimator for the Nakagami parameter.
In Section \ref{sec:adapt}, we explain the adaptation with low-pass filter.
In Section \ref{sec:loss}, we introduce the loss function for learning the score function.

\subsection{Nakagami parameter Estimator with Score function} \label{sec:estimator}
Instead of maximizing the likelihood function with respect to $m$ in the MLE~\cite{cheng2001maximum}, which entails complex computation and approximation errors, we calculated the posterior mean of the Nakagami parameter using the envelope data $R$, thereby achieving Minimum Mean Square Error Estimation (MMSE). According to Bayes' rule, the joint posterior distribution can be represented as follow: 
\begin{align}
    p_{R}(m, \Omega |r) = p_{R}(r|m,\Omega)p(m)p(\Omega)/p_{R}(r) 
\end{align}
where $p(m)$, $p(\Omega)$, and $p_R(r)$ denote the marginal distribution of $m$, $\Omega$ and $r$, respectively.
The mode of the posterior distribution can be obtained by finding the maximum of $ p_{R}(m, \Omega|r)$. Specifically, by calculating the gradient of $\log p_{R}(m, \Omega|r)$ with respect to $r$ and setting it zero, the posterior estimate of the parameter $m$ should satisfy the following equality:
\begin{align}
   \nabla_r \log p_{R}(r|m, \Omega) - \nabla_r \log p_{R}(r) = 0, \label{key equation}
\end{align}
where $\nabla_r \log p_{R}(r|m, \Omega)$ and $\nabla_r \log p_{R}(r)$ are score function. 
By using this equality, we can derive the following closed-form estimator for the Nakagami parameter. 
\begin{prop}
For the given measurement model \eqref{pd:naka},
the estimate of the unknown Nakagami parameter $m$ is given by
\begin{align}
\hat m = \mathbb{E}[m|r] = \frac{\frac{1}{r} + \nabla_r \log p_{R}(r) }{\left(\frac{2}{r} - \frac{2r}{\hat \Omega}\right)}, 
\end{align}
where 
$\nabla_r \log p_{R}(r)$ is the score function of the RF envelope data $R$ and $\hat \Omega = \mathbb{E}[R^2]$ 
\label{prop}
\end{prop}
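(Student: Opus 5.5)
We compute the conditional score $\nabla_r \log p_R(r|m,\Omega)$ in closed form from \eqref{pd:naka}, substitute it into the stationarity condition \eqref{key equation}, and solve the resulting linear equation for $m$. The posterior-mean interpretation is obtained by averaging this identity over $p(m,\Omega\mid r)$, as in the Tweedie-type derivations of Noise2Score.

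\textbf{Step 1 (conditional score).} For $r>0$ the unit step $\mathcal{U}(r)$ is constant, so we may drop it. Then
\begin{align*}
\log p_R(r|m,\Omega) = \log\frac{2}{\Gamma(m)} + m\log\frac{m}{\Omega} + (2m-1)\log r - \frac{m}{\Omega}r^2 .
\end{align*}
The first two terms do not depend on $r$. Differentiating therefore gives
\begin{align*}
\nabla_r \log p_R(r|m,\Omega) = \frac{2m-1}{r} - \frac{2m r}{\Omega}
= m\left(\frac{2}{r}-\frac{2r}{\Omega}\right) - \frac{1}{r}.
\end{align*}
This expression is affine in $m$, and that linearity is what yields a closed form.

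\textbf{Step 2 (linear equation).} Substituting Step 1 into \eqref{key equation} gives
\begin{align*}
m\left(\frac{2}{r}-\frac{2r}{\Omega}\right) = \frac{1}{r} + \nabla_r\log p_R(r).
\end{align*}
To make the Bayesian reading precise, we use the identity $\nabla_r \log p_R(r) = \mathbb{E}[\nabla_r \log p_R(r|m,\Omega)\mid r]$. It follows by differentiating $p_R(r)=\int p_R(r|m,\Omega)p(m)p(\Omega)\,dm\,d\Omega$ under the integral sign and dividing by $p_R(r)$. Taking the conditional expectation of the Step 1 formula then gives
\begin{align*}
\nabla_r\log p_R(r) = \mathbb{E}\!\left[m\left(\tfrac{2}{r}-\tfrac{2r}{\Omega}\right)\Big|\,r\right] - \frac{1}{r}.
\end{align*}

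\textbf{Step 3 (decoupling $\Omega$).} Next we replace $\Omega$ by its estimate $\hat\Omega=\mathbb{E}[R^2]$. This is justified because the second moment of \eqref{pd:naka} equals $\Omega$, as noted in the discussion of \eqref{m_estimator}. With this substitution the factor $\frac{2}{r}-\frac{2r}{\hat\Omega}$ is deterministic given $r$. It can be pulled out of the conditional expectation, and dividing through gives
\begin{align*}
\mathbb{E}[m|r] = \frac{\frac{1}{r}+\nabla_r\log p_R(r)}{\frac{2}{r}-\frac{2r}{\hat\Omega}}.
\end{align*}
This is the claimed formula.

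\textbf{Main obstacle.} The delicate step is Step 3. The posterior expectation of $m\bigl(\tfrac{2}{r}-\tfrac{2r}{\Omega}\bigr)$ factors as $\mathbb{E}[m|r]\bigl(\tfrac{2}{r}-\tfrac{2r}{\hat\Omega}\bigr)$ only if $\Omega$ is treated as known, equal to the plug-in $\hat\Omega$ (equivalently, a point-mass prior on $\Omega$). So the proof must state explicitly that $\Omega$ is fixed at its moment estimate, rather than claim full marginalization over $\Omega$. Two further conditions are needed for division:
\begin{itemize}
\item the denominator must be nonzero, i.e.\ $r^2\neq\hat\Omega$;
\item $r>0$, so the step function can be ignored.
\end{itemize}
Finally, exchanging derivative and integral in Step 2 requires mild regularity of the priors, which we simply assume.
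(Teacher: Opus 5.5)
Your proposal is correct, and its computational core is the paper's proof. The appendix computes the conditional score $\frac{2m-1}{r}-\frac{2mr}{\Omega}$, equates it to $\nabla_r\log p_R(r)$ via \eqref{key equation}, solves the linear equation for $m$, and plugs in $\hat\Omega=\mathbb{E}[R^2]$; this is your Step 1, the first display of Step 2, and Step 3.

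The difference is in how the label $\hat m=\mathbb{E}[m|r]$ is justified. The paper simply asserts it, even though \eqref{key equation} comes from differentiating the posterior in $r$. That makes it a heuristic pointwise condition, and by itself it says nothing about a posterior mean. Your identity $\nabla_r\log p_R(r)=\mathbb{E}[\nabla_r\log p_R(r|m,\Omega)\mid r]$, combined with the fact that the conditional score is affine in $m$, is the Tweedie-type argument that actually makes the posterior-mean claim true. It shows that \eqref{key equation} holds in conditional expectation rather than pointwise, so your initial pointwise use of it in Step 2 can be dropped.

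The cost of your route is that you must state what the paper leaves implicit:
\begin{itemize}
\item $\Omega$ has to be degenerate at a known value. Then $\mathbb{E}[R^2]$ equals that value exactly, because the Nakagami second moment does not depend on $m$. Otherwise you would need $\mathbb{E}[m/\Omega\mid r]=\mathbb{E}[m|r]/\hat\Omega$, which fails in general.
\item You need $r^2\neq\hat\Omega$.
\item You need enough regularity to differentiate under the integral.
\end{itemize}
Your route is therefore the more rigorous of the two, and it exposes the singularity at $r^2=\hat\Omega$ that the paper's formula has but never mentions.
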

\begin{proof}
See \cref{proof1}.
\end{proof}
Notably, in the derivation of \cref{prop}, no approximations are necessary, leading to more accurate estimations compared to existing methods. We demonstrated that the posterior mean of the Nakagami parameter can be calculated using a nonlinear formula involving the score function of the measurement. To the best of our knowledge, this is the first attempt to estimate the Nakagami parameter using the score function of the measurement.

\subsection{Adaptation via Low-Pass Filter} \label{sec:adapt}
In conventional methods, the local Nakagami parameter is calculated within a predefined window size to stabilize the estimation at the expense of resolution. Instead of adopting a sliding window approach, our proposed method uses a pixel-by-pixel estimator to provide high-resolution images. To further enhance the robustness of the algorithm against outliers, we incorporate a small-size low-pass filter, such as a median or average filter, to the estimated parameters $\hat{m}$ as follows:
\begin{align}
m_{\text{UNICORN}} = \texttt{low-pass Filter} (\hat m)
\end{align}
where $m_{\text{UNICORN}}$ is the final Nakagami parameter obtained by our proposed method. 
As indicated in \cref{fig:main} (c), we obtain the final Nakagami image map using the pixel-by-pixel estimator combined with a low-pass filter. 

\subsection{Loss function for Denoising Score Matching} \label{sec:loss}

To learn the score function from the ultrasound backscattered envelope $R$, 
we employ the amortized residual DAE (AR-DAE), which is a stabilized implementation of denoising autoencoder \cite{lim2020ar} 
Specifically,  AR-DAE loss function is defined by:
\begin{align}
\underset{\Theta}{\arg \min }  ={\underset{R \sim p_{R}(r)}{\mathbb{E}}}\|u + \sigma_a s_\theta(R + \sigma_a u)\|^2
\label{loss:ar-dae}
\end{align}
where $s_{\theta}$ is the score model parameterized by $\theta$, $u \sim \mathcal{N}(0,I)$, and $\sigma_a$ $\sim \mathcal{N}(0,\delta^2)$. $\sigma_a$ is perturbed noise which gradually decreases with an annealing schedule. Minimizing Eq. (\ref{loss:ar-dae}) provides the network $s_{\theta^{\ast}}$ which can directly estimate the score function of envelope data, $s_{\theta^{\ast}} = \nabla_{R}\log p_{R}(r) = l'(r)$.
Estimating the score function of measurements via \cref{loss:ar-dae} has been demonstrated to be both direct and stable \cite{kim2021noise2score,kim2022noise}. Therefore, we adopt this approach as the initial step of our method (see \cref{fig:main} (c)).

\section{Experimental setting}
In this work, we conduct two categories of experiments to validate the effectiveness of our proposed method. Firstly, we perform simulations using synthetic Nakagami distributions on grayscale image dataset and ultrasound image dataset. Secondly, we apply our method to real ultrasound RF envelope datasets to validate the effectiveness of distinction between normal liver and fatty liver.

\subsection{Datasets}

\noindent\textbf{Synthetic Simulation Experiments} We conduct evaluations on both the MNIST and the BUSI ultrasound image datasets. For the MNIST dataset, we train our neural network to learn the score function using the training set and evaluate its performance on the test set.
For the BUSI dataset~\cite{al2020dataset}, which consists of breast ultrasound B-mode images from women, we adopt it for our ultrasound image dataset. 
For simulation experiments, we initially normalize images from the range of 0 to 1 to the range of 0.5 to 2 and set it as the ground truth $m$. Subsequently, we apply the Nakagami distribution (as described in \cref{pd:naka}) per pixel to generate synthetic measurement with $\Omega = 1$.

\noindent\textbf{In-vivo Liver Experiment} 
To validate the effectiveness of clinical usefulness, we evaluate the in-vivo liver dataset. In vivo liver experiments were conducted using data collected from Samsung Medical Center (SMC). The hospital's Institutional Review Board has approved this study (IRB No. SMC 2022-11-026). Written informed consents were obtained from all participants to utilize their data. Ultrasound radiofrequency (RF) data was acquired using the RS85 system and a convex probe (CA1-7S) from Samsung Medison. Multiple ultrasound scans of the right intercostal plane were performed by radiologists at each institution to visualize the hepatic parenchyma of the right liver, and RF data was automatically recorded for each scan. Magnetic resonance imaging proton density fat fraction (MRI-PDFF) data was obtained according to the standardized protocol recommended by the Quantitative Imaging Biomarkers Alliance (QIBA)~\cite{shukla2019quantitative}. The distribution of hepatic steatosis severity grades for patients recruited by each institution is summarized in \cref{tab:pdff}.

\subsection{Implementation Details}
To ensure a fair comparison with other methods, we compare our proposed method with conventional approaches such as the momentum estimator~\cite{tsui2007imaging}, MLE approach~\cite{cheng2001maximum}, and WMC estimator~\cite{tsui2014window}. The baseline window size is set to a side length equal to three times the pulse length of the incident ultrasound; for the WMC method the window size is varied from three to five times the pulse length.

For training the score network, we use the following batch sizes: 512 for the MNIST dataset, 16 for the BUSI dataset, and 4 for the in‑vivo dataset. 
 All datasets are trained for 50 epochs with an initial learning rate of 
 2 $\times$ 10$^{-4}$
, which is halved after 25 epochs. Optimization is performed with AdamW (weight decay=0.01). Data augmentation includes horizontal flipping, rotation, and random cropping to 
256$\times$256 pixels. The model is implemented in PyTorch and run on a singleNVIDIA GeForce 3090 GPU. For the in‑vivo experiments we employ the SMC dataset with 5‑fold cross‑validation.
Low‑pass filtering is applied as follows: a median filter (kernel size=7) for the synthetic experiments and an average filter (kernel size=7) for the in‑vivo experiments.

\begin{table}[!t]
\caption{The number of subjects across hepatic steatosis stages determined by MRI PDFF on each dataset. One frame per patient is used for statistical analysis.}
\centering
\scriptsize
\resizebox{1\linewidth}{!}{
    \begin{tabular}{lcccc}
    \toprule
    & \multicolumn{4}{c}{\textbf{MRI-PDFF}} \\
    \cmidrule(lr){2-5}
    Dataset & $<5\%$ (Normal) & $5\%$-$15\%$ (Mild) & $\geq 15\%$ (Severe) & Total \\
    \cmidrule(l){1-1} \cmidrule(l){2-2} \cmidrule(l){3-3} \cmidrule(l){4-4} \cmidrule(l){5-5}
    SMC & 224 & 170 & 55 & 449 \\
    \bottomrule
    \end{tabular}
}
\label{tab:pdff}
\end{table}

\subsection{Evaluation Metric}

\noindent\textbf{Synthetic Simulation Experiments} Since we generate synthetic measurements under the Nakagami distribution using the ground truth image, we evaluate the performance of the simulation results using the PSNR (Peak Signal-to-Noise Ratio) and RMSE (Root Mean Square Error) metrics across various methods.
Specifically, we calculate the metrics with the following equation:
\begin{align}
\text{PSNR} = 10 \log_{10} \left( \frac{{\text{MAX}_{I}^2}}{{\text{MSE}}} \right), \quad
\text{RMSE} = \sqrt{\text{MSE}}
\end{align}
where \(\text{MAX}_{I}\) is the maximum possible pixel value of the image and \(\text{MSE}\) is the Mean Square Error between the ground truth image and the estimated image.

\noindent\textbf{In-vivo Liver Experiment}\label{sec:eval}
In this experiment, we are unable to access the ground truth data for Nakagami imaging itself, preventing a quantitative evaluation against a reference. Instead of ground truth Nakagami imaging, we utilize the MRI-PDFF value with paired envelope data to evaluate our proposed method. The MRI-PDFF is considered a reliable quantification metric of liver fat fraction.
Using MRI-PDFF, we conduct various statistical analyses. First, we calculate the correlation between the estimated Nakagami parameter in the liver region and MRI-PDFF to validate clinical usefulness. Furthermore, we define steatosis stages based on MRI-PDFF extents as indicated in \cref{tab:pdff}. We then analyze the differentiation between stages using box plots and Receiver Operating Characteristic (ROC) curves with Area Under the ROC Curve (AUROC). 

In normal liver tissue, ultrasound backscatter is typically lower and more uniform. Fatty liver, however, exhibits increased backscatter due to fat deposits and altered tissue architecture, leading to more intense reflection and scattering. The Nakagami parameter, which quantifies backscatter, is therefore higher in fatty liver, aiding in its distinction from normal liver. Thus, we assess qualitative results based on the differences in Nakagami parameters between normal and fatty liver cases.



\begin{figure*}[!ht]
\centering
\includegraphics[width = 0.9\linewidth]{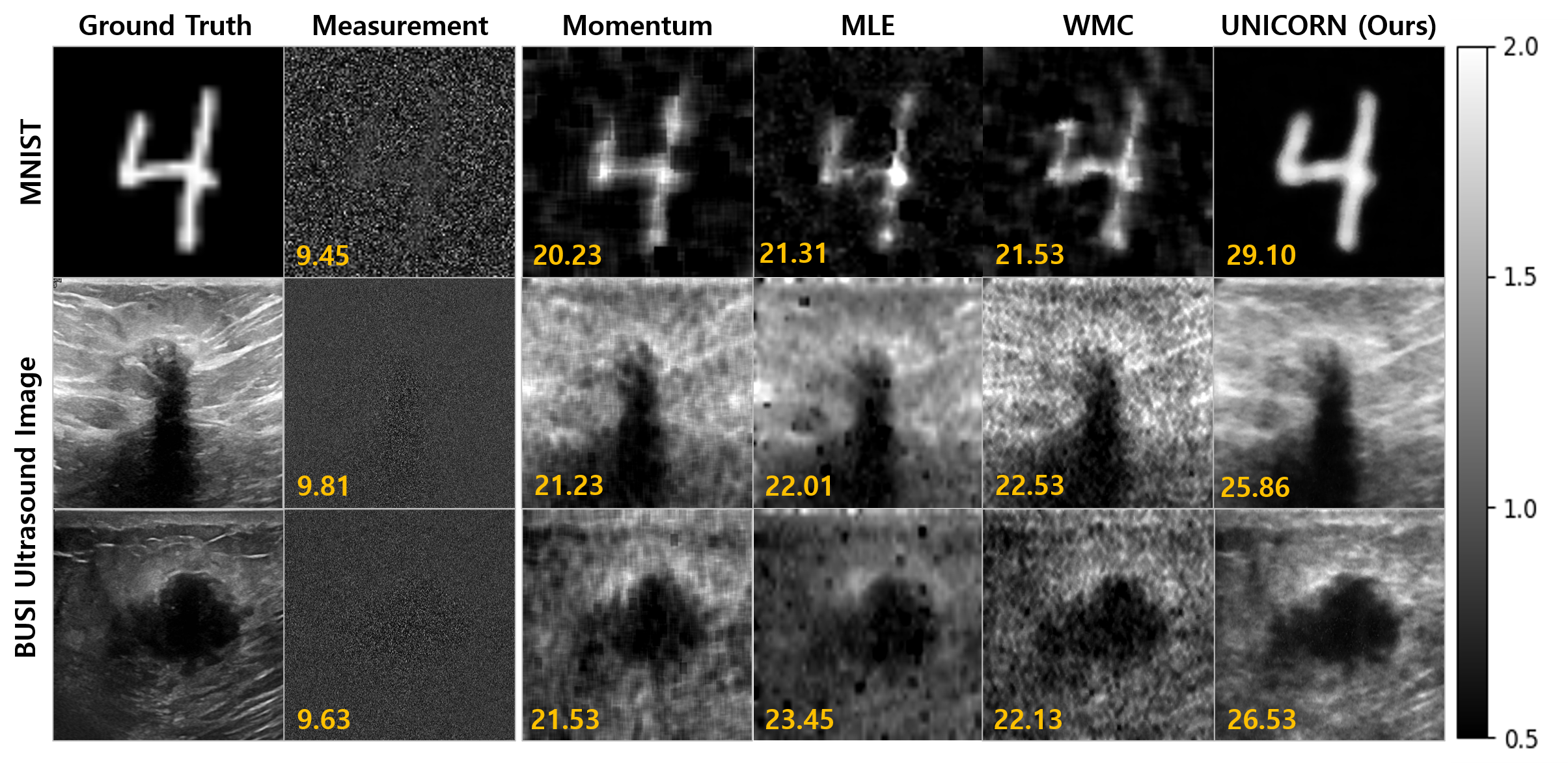}
\caption{Comparison of qualitative results across various methods: (Row 1) MNIST dataset, (Rows 2-3) BUSI Ultrasound Image dataset. Comparison against Momentum, MLE, WMC, and {UNICORN}. The yellow numbers indicate PSNR. To visualize the results, we normalize the images and convert them to grayscale.}
\label{fig}
\end{figure*}

\begin{table}[!ht]
	\centering
	\caption{Quantitative simulation results for MNIST dataset and ultrasound image BUSI dataset using various methods. The \textbf{bold} numbers indicate the best performance. WS denotes sliding window size.}
	\resizebox{1\linewidth}{!}{
		\begin{tabular}{ccccccc}
			\toprule
			Dataset & \multicolumn{3}{c}{MNIST} & \multicolumn{3}{c}{BUSI} \\ 
            \cmidrule(r){1-1} \cmidrule(r){2-4} \cmidrule(r){5-7} 
			Metric & WS & PSNR (dB) $\uparrow$ & RMSE $\downarrow$ & WS & PSNR (dB) $\uparrow$ & RMSE $\downarrow$ \\ 
			\cmidrule(r){1-1} \cmidrule(r){2-4} \cmidrule(r){5-7} 
		  Measurement  & - & 9.17 & 0.695 & - & 10.14 & 0.651 \\ 	
   		Momentum & 9 & 21.01 & 0.179 & 9 & 18.58 & 0.230 \\ 
            Momentum & 11 & 21.28 & 0.173 & 13 & 21.22 & 0.174 \\
            Momentum & 13 & 20.78 & 0.184 & 25 & 23.31 & 0.132 \\
            MLE & 11 & 20.45 & 0.191 & 25 & 22.65 & 0.148\\ 	
            WMC & 9,11,13 & 21.69 & 0.165 & 9,13,25 & 22.28 & 0.154 \\ 	
            WMC & 7,9,11,13,15 & 21.53 & 0.168 & 9,13,17,21,25 & 23.01 & 0.141 \\ 	
            \cmidrule(r){1-1} \cmidrule(r){2-4} \cmidrule(r){5-7}
			UNICORN (Ours)  & - & \textbf{28.28}(\textcolor{blue}{+6.75}) & \textbf{0.077}(\textcolor{blue}{-0.091}) & - & \textbf{25.72}(\textcolor{blue}{+2.11}) & \textbf{0.011}(\textcolor{blue}{-0.130}) \\ 			
			\bottomrule    
		\end{tabular}
	}
	\vspace{-0.5em}
	\label{tbl:main}
\end{table}

\section{Results}
\subsection{Synthetic Simulation dataset}
In \cref{tbl:main}, we provide a quantitative performance comparison using various methods on the MNIST and BUSI datasets in terms of PSNR and RMSE. For the MNIST dataset, the PSNR of the measurement method is only 9.17 dB compared to the ground truth, indicating the highly ill-posed nature of the problem. 
Momentum-based approach results were sensitive to varied window sizes. Since MLE needs higher computational cost compared to the momentum-based approach, we select the best window size from momentum-based approaches for MLE and set the step size to half of the window size. 
Our proposed method, UNICORN, does not require any window size optimization but still achieves the highest PSNR of 28.28 dB and the lowest RMSE of 0.077, surpassing all other methods. UNICORN also outperforms the existing state-of-the-art method by a significant margin of +6.75 dB in PSNR. Similarly, on the BUSI dataset, UNICORN achieves superior performance with a PSNR of 25.71 dB and an RMSE of 0.089, representing a margin of +2.7 dB in PSNR.

In \cref{fig}, we compare the qualitative results of our proposed method against existing methods. We observe that the results obtained with Momentum-based approaches and the MLE approach are too blurred out and exhibit artifacts, such as the tile effect. While the WMC approach mitigates blurring to some extent, it introduces speckle noise artifacts. In contrast, our proposed method is capable of closely estimating Nagakami parameter mapping compared to the ground truth label in MNIST and provides superior resolution without significant artifacts in BUSI dataset.


    

\begin{figure*}[!t]
\centering
\includegraphics[width = 0.85\linewidth]{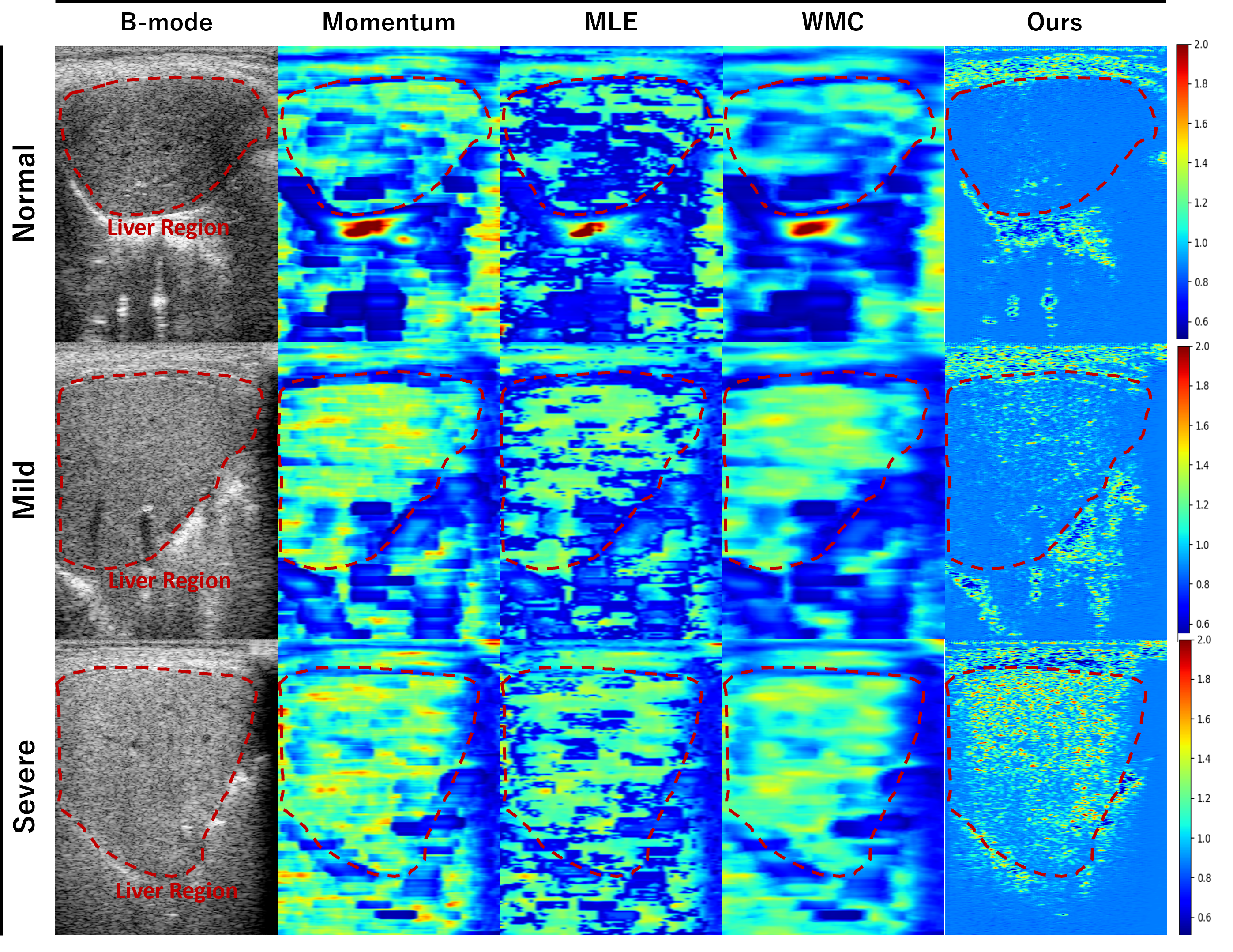}
\caption{Comparison of qualitative results obtained using different methods. Each row corresponds to Normal, Mild, and Severe liver cases, with MRI-PDFF values of 0.71, 6.34, and 22.12, respectively. The comparison includes the Momentum, MLE, WMC, and UNICORN methods. Red dashed lines delineate the liver region. Compared to other methods, our approach provides clearer visual cues for distinguishing between normal, mild, and severe fatty liver cases.}
\label{fig:asia}
\end{figure*}


\subsection{Qualitative Results on Real Clinical data}

Using the RF envelope dataset from real ultrasound imaging, we compare qualitative results with various methods as illustrated in Figure \ref{fig:asia}. The figure displays results from the SMC in‑vivo dataset for three representative cases: a normal liver, a mildly steatotic liver, and a severely steatotic liver.

Conventional estimators, namely the momentum estimator~\cite{tsui2007imaging}, the MLE estimator~\cite{cheng2001maximum}, and the WMC estimator~\cite{tsui2014window}, do produce activation values within the liver region, but the activations for mild versus severe fatty liver are hardly distinguishable, making visual discrimination ambiguous at best.

Conversely, both mild and severe fatty‑liver cases show markedly elevated Nakagami values that are confined to the liver region, with a clear gradient reflecting the degree of steatosis. This striking contrast enables an instantaneous and reliable distinction between normal and fatty liver tissue, allowing the practitioner to assess the severity of steatosis at a glance. Hence, UNICORN not only overcomes the subtlety problem of existing techniques but also sets a new benchmark for intuitive, image‑based assessment of hepatic steatosis.



\subsection{Analysis on Real Clinical data}

\noindent\textbf{Correlation Between Nakagami Parameter and MRI‑PDFF}
To further demonstrate the effectiveness of UNICORN, we calculate the Pearson correlation coefficient (PCC) between the estimated Nakagami parameter $m$ by each method and MRI-PDFFs, and we test the statistical significance of the correlations as indicated in \cref{tab:pcc}. The momentum estimator and WMC estimator yield moderate PCC values ranging from 0.53 to 0.544, whereas the MLE method achieve a slightly higher PCC of 0.65. In contrast, UNICORN produce the strongest correlation, with a PCC of 0.77, and the associated $p$‑values are significant on dataset.

\cref{fig:scatterplot} illustrates the scatter plots of $m$
 versus MRI‑PDFF for the baseline methods and UNICORN.  The conventional approaches tend to saturate once $m$ exceeds 5, indicating limited scalability and a markedly non‑linear relationship with MRI‑PDFF. UNICORN, however, maintains an approximately linear trend across the entire range of $m$ , preserving scalability and yielding a monotonic increase in MRI‑PDFF for all subjects. This linearity highlight UNICORN’s superior ability to model the relationship between the Nakagami parameter and liver fat fraction, representing a clear improvement over existing techniques.

\begin{figure*}[!ht]
\centering
\includegraphics[width = 0.9\linewidth]{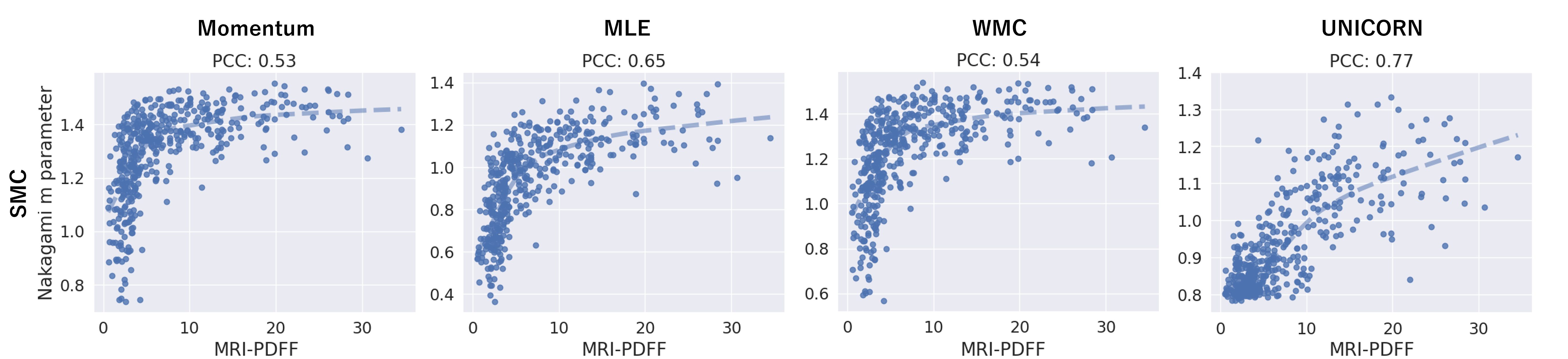}
\caption{Scatter plots illustrating the relationship between MRI-PDFF and estimated m parameters across various methods and datasets: Columns 1-4 represent Momentum, MLE, WMC, and UNICORN, respectively; PCC denotes the Pearson Correlation Coefficient between MRI-PDFF and the estimated $m$ parameter. The trend line follows a locally weighted linear regression model.}
\label{fig:scatterplot}
\end{figure*}

\begin{table}[!ht]
    \caption{Comparison of Pearson Correlation Coefficient (PCC) between MRI-PDFF and estimated Nakagami $m$ parameters and Statistical Significance (SS) across various methods} 
    \centering
    \scriptsize
    \resizebox{0.6\linewidth}{!}{
    \begin{tabular}{lcc}
          \toprule
        \multirow{2}{*}{Method}  & \multicolumn{2}{c}{SMC}   \\    
               \cmidrule(l){2-3} 
            & PCC(r) & SS  \\    
          \cmidrule(l){1-1} \cmidrule(l){2-2}  \cmidrule(l){3-3}  
         Momentum & 0.53 & ****  \\
         MLE  &   0.65 & ****  \\
         WMC  &   0.54 & **** \\
         \cmidrule(r){1-1} \cmidrule(r){2-3} 
         UNICORN(Ours) & \bf{0.77} & **** \\
    \bottomrule 
    \end{tabular}
    }
    \label{tab:pcc}
\end{table}

\begin{figure}[!t]
\centering

\includegraphics[width = 0.85\linewidth]{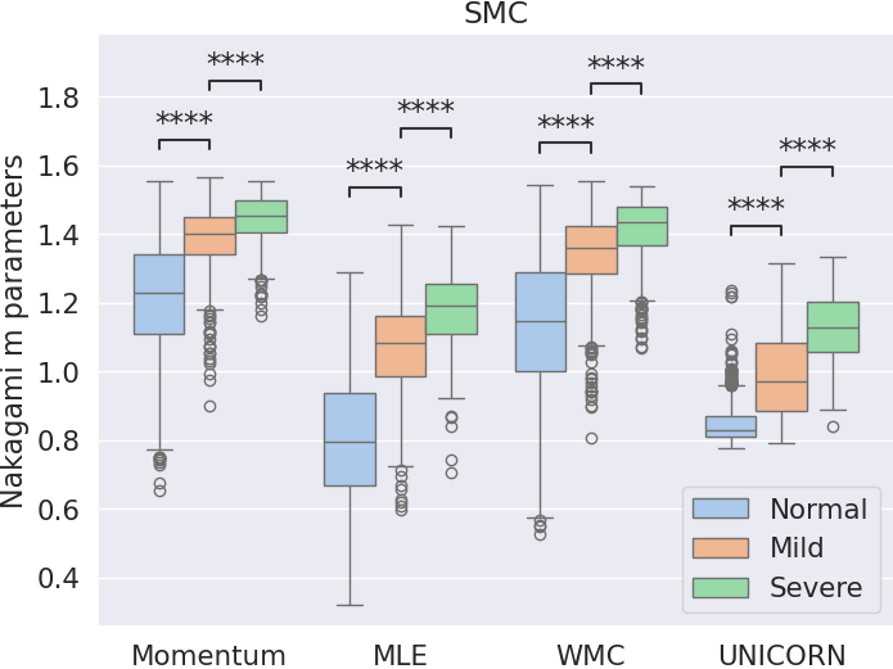}
\caption{Box plots illustrating the relationship between MRI-PDFF and estimated m parameters across different methods at each stage: Methods include Momentum, MLE, WMC, and UNICORN; stages are defined as Normal (MRI-PDFF $<$ 5$\%$), Mild (5$\%$ $\leq$ MRI-PDFF $<$ 15$\%$), and Severe (MRI-PDFF $>$ 15$\%$). The $p$-values between neighboring stages are indicated within the box plots.}
\label{fig:boxplot}
\end{figure}

\begin{figure*}[!t]
\centering
\includegraphics[width = 0.8\linewidth]{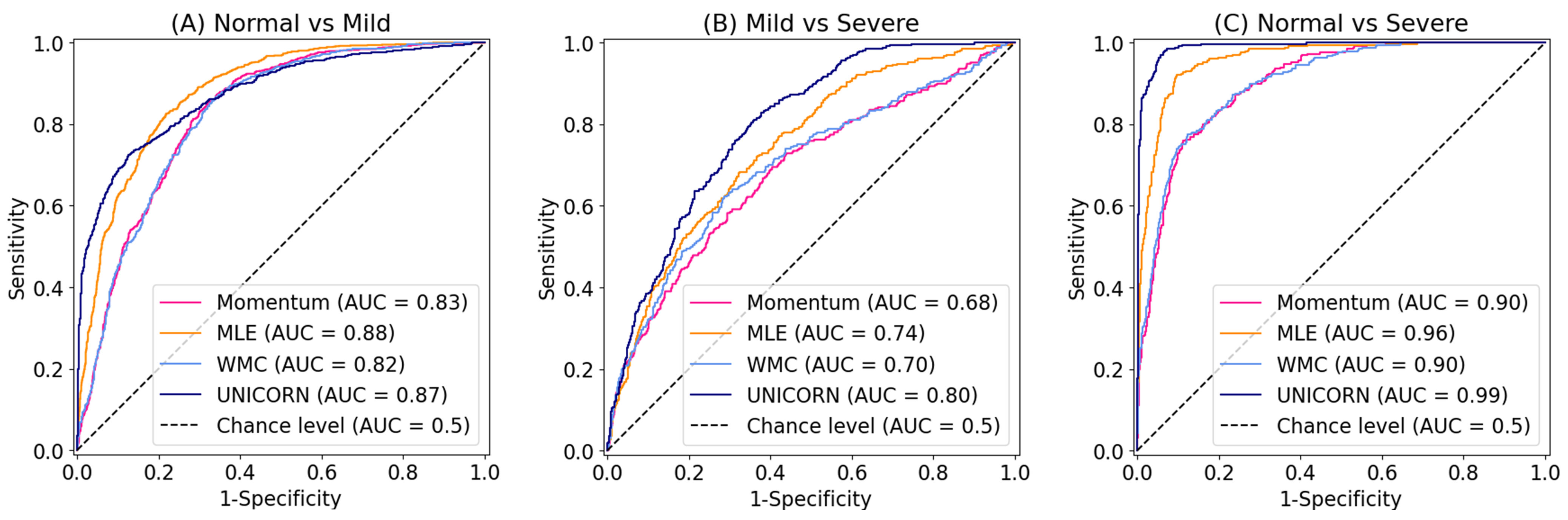}
\caption{ROC curves for estimated m parameters in the differentiation of hepatic steatosis stages using overall datasets across different methods: Momentum, MLE, WMC, and UNICORN. Stages are defined as Normal (MRI-PDFF $<$ 5$\%$), Mild (5$\%$ $\leq$ MRI-PDFF $<$ 15$\%$), and Severe (MRI-PDFF $>$ 15$\%$). (A) Normal vs Mild, (B) Mild vs Severe, and (c) Normal vs Severe.}
\label{fig:auroc}
\end{figure*}

\noindent\textbf{Investigation on Different Fact Fraction Stages}
To further validate UNICORN’s effectiveness, we compare the estimated Nakagami shape parameter $m$ across three fat‑fraction stages as depicted in \cref{fig:boxplot}. We categorize the stages into three groups: normal (MRI-PDFF $<$ 5$\%$), mild (5$\%$ $\leq$ MRI-PDFF $<$ 15$\%$), and severe (MRI-PDFF $>$ 15$\%$). Each box plot displays the distribution of $m$ within a stage, with Welch’s t-test p-values quantifying separability between adjacent stages.
Fig.~\ref{fig:boxplot} reveals that baseline methods yield sparse, high‑variance distributions, while UNICORN produces a stable, low‑variance distribution, particularly for normal cases. This robustness demonstrates its reliability for stage discrimination.

Figure \ref{fig:auroc} illustrates the ROC curves for the Nakagami parameter in hepatic steatosis classification, evaluated on the SMC dataset across three pairwise comparisons: (A) Normal vs. Mild, (B) Mild vs. Severe, and (C) Normal vs. Severe. For comparison (A), the momentum and WMC estimators yielded moderate AUROCs of 0.82–0.83, while the MLE estimator and UNICORN achieved higher values of 0.88 and 0.87, respectively. In comparison (B), UNICORN significantly outperformed baseline methods, exceeding the momentum estimator by 0.12 AUROC points and the MLE estimator by 0.06 AUROC points. This advantage arises from UNICORN’s linear relationship with MRI-PDFF, whereas other methods exhibit saturation at higher fat fractions. For comparison (C), baseline methods demonstrated strong performance (AUROC = 0.90–0.96), but UNICORN attained near-perfect discrimination with an AUROC of 0.99. Collectively, these results confirm UNICORN’s diagnostic superiority, particularly for advanced steatosis stages.

\cref{tab:auroc-1} presents the ROC analysis of UNICORN (Fig.~\ref{fig:auroc}), reporting sensitivity, specificity, positive predictive value (PPV), and negative predictive value (NPV) at three thresholds: (a) the optimal threshold maximizing the F1-score to address class imbalance~\cite{lipton2014optimal}, (b) a fixed 90\% specificity threshold, and (c) a fixed 90\% sensitivity threshold.
For the comparison (A), UNICORN achieves a sensitivity of 0.73 and specificity of 0.87 at the optimal threshold, with corresponding PPV, NPV, and F1-scores of 0.81, 0.80, and 0.77, respectively. At the 90\% specificity threshold, sensitivity decreases to 0.68, while at the 90\% sensitivity threshold, specificity falls to 0.57.
In comparison (B), at the optimal threshold, UNICORN yields a sensitivity of 0.83, specificity of 0.62, PPV of 0.42, NPV of 0.92, and F1-score of 0.55. Under the 90\% specificity constraint, sensitivity drops to 0.38; at 90\% sensitivity, specificity is 0.48.
Notably, for comparison (C), a clinically distinct classification, UNICORN demonstrates exceptional performance, with sensitivity and specificity both exceeding 0.93 across all thresholds. These results underscore UNICORN’s diagnostic robustness, particularly in discriminating extreme-stage steatosis (C), while maintaining adaptability to threshold tuning for nuanced classifications (A, B).

\cref{tab:auroc-2} summarizes the ROC analysis comparing baseline methods with the UNICORN model at the optimal threshold (Fig.~\ref{fig:auroc}). UNICORN achieves the highest AUC for distinguishing severe fatty liver from mild and normal cases. Due to its lower false-positive rate, the optimal threshold for task (A) corresponds to relatively high specificity and low sensitivity. Consequently, UNICORN exhibits substantial gains in specificity (margins: 0.18–0.09) and PPV (margins: 0.06–0.13) for (A), albeit with slight declines in sensitivity and NPV. When evaluated by the F1-score, a metric suited to imbalanced data, UNICORN performs comparably to baselines in (A) and surpasses them in (B) and (C). Notably, for (C), UNICORN excels across all metrics. These results collectively demonstrate UNICORN’s superior ability to discriminate steatosis stages, particularly for advanced cases.

\begin{table}[!ht]
    \caption{ROC curve analysis of UNICORN (\cref{fig:auroc}). (a) optimal threshold, (b) threshold for 90\% specificity, (b) threshold for 90\% sensitivity. PPV denotes positive predictive value and NPV denotes negative predictive value.} 
    \centering
    \scriptsize
    \resizebox{1\linewidth}{!}{
    \begin{tabular}{lcccccc}
          \toprule
          & AUC & Sensitivity & Specificity & PPV & NPV & F1 \\    
           \cmidrule(l){1-1} \cmidrule(l){2-2}  \cmidrule(l){3-3}  \cmidrule(l){4-4} \cmidrule(l){5-5} \cmidrule(l){6-6} \cmidrule(l){7-7} 
           \rowcolor{Gray}
           (A) Normal vs Mild & 0.87 & & & & & \\
           (a) Optimal & & 0.73 & 0.87 & 0.81 & 0.80 & 0.77 \\
           (b) 90\% Specificity & & 0.68 & 0.90 & 0.85 & 0.78 & 0.76 \\
           (c) 90\% Sensitivity & & 0.90 & 0.57 & 0.62 & 0.88 & 0.74 \\
           \cmidrule(l){1-1} \cmidrule(l){2-2} \cmidrule(l){3-7}
           \rowcolor{Gray}
           (B) Mild vs Severe & 0.80 & & & & & \\
           (a) Optimal & & 0.83 & 0.62 & 0.42 & 0.92 & 0.55 \\
           (b) 90\% Specificity & & 0.38 & 0.90 & 0.57 & 0.82 & 0.46 \\
           (c) 90\% Sensitivity & & 0.90 & 0.48 & 0.36 & 0.94 & 0.52 \\
           \cmidrule(l){1-1} \cmidrule(l){2-2} \cmidrule(l){3-7}
           \rowcolor{Gray}
           (C) Normal vs Severe & 0.99 & & & & & \\
           (a) Optimal & & 0.98 & 0.93 & 0.79 & 0.99 & 0.88 \\
           (b) 90\% Specificity & & 0.99 & 0.91 & 0.73 & 1.00 & 0.84 \\
           (c) 90\% Sensitivity & & 0.91 & 0.97 & 0.88 & 0.98 & 0.89 \\
    \bottomrule 
    \end{tabular}
    }
    \label{tab:auroc-1}
\end{table}

\begin{table}[!ht]
    \caption{ROC curve analysis comparing baseline methods corresponding to \cref{fig:auroc}. The threshold is selected by the highest F1 score. PPV denotes positive predictive value, and NPV denotes negative predictive value.} 
    \centering
    \scriptsize
    \resizebox{1\linewidth}{!}{
    \begin{tabular}{clcccccc}
          \toprule
          & Method & AUC & Sensitivity & Specificity & PPV & NPV & F1 \\
           \cmidrule(l){1-1} \cmidrule(l){2-2}  \cmidrule(l){3-3}  \cmidrule(l){4-4} \cmidrule(l){5-5} \cmidrule(l){6-6} \cmidrule(l){7-7} \cmidrule(l){8-8}
           \multirow{4}{*}{\makecell{(A) \\ Normal vs Mild}} & Momentum & 0.83 & 0.85 & 0.68 & 0.68 & 0.85 & 0.76 \\
           & MLE & \bf{0.88} & 0.83 & 0.78 & 0.75 & 0.85 & \bf{0.79}  \\
           & WMC & 0.82 & \bf{0.88} & 0.65 & 0.68 & \bf{0.87} & 0.76 \\
           \cmidrule(l){2-8}
           & UNICORN & 0.87 & 0.73 & \bf{0.87} & \bf{0.81} & 0.80 & 0.77   \\
           \cmidrule(l){1-1} \cmidrule(l){2-8}
           \multirow{4}{*}{\makecell{(B) \\ Mild vs Severe}} & Momentum & 0.68 & 0.70 & 0.60 & 0.36 & 0.86 & 0.47  \\
           & MLE & 0.74 & 0.68 & 0.67 & 0.41 & 0.87 & 0.51  \\
           & WMC & 0.70 & 0.62 & \bf{0.71} & 0.41 & 0.85 & 0.50  \\
           \cmidrule(l){2-8}
           & UNICORN & \bf{0.80} & \bf{0.83} & 0.62 & \bf{0.42} & \bf{0.92} & \bf{0.55}  \\
           \cmidrule(l){1-1} \cmidrule(l){2-8}
           \multirow{4}{*}{\makecell{(C) \\ Normal vs Severe}} & Momentum & 0.90 & 0.76 & 0.89 & 0.64 & 0.93 & 0.69 \\
           & MLE & 0.96 & 0.92 & 0.90 & 0.71 & 0.98 & 0.80 \\
           & WMC & 0.90 & 0.77 & 0.88 & 0.62 & 0.94 & 0.69 \\
           \cmidrule(l){2-8}
           & UNICORN & \bf{0.99} & \bf{0.98} & \bf{0.93} & \bf{0.79} & \bf{0.99} & \bf{0.88}  \\
    \bottomrule 
    \end{tabular}
    }
    \label{tab:auroc-2}
\end{table}

\begin{figure}[!t]
\centering
\includegraphics[width = 0.85\linewidth]{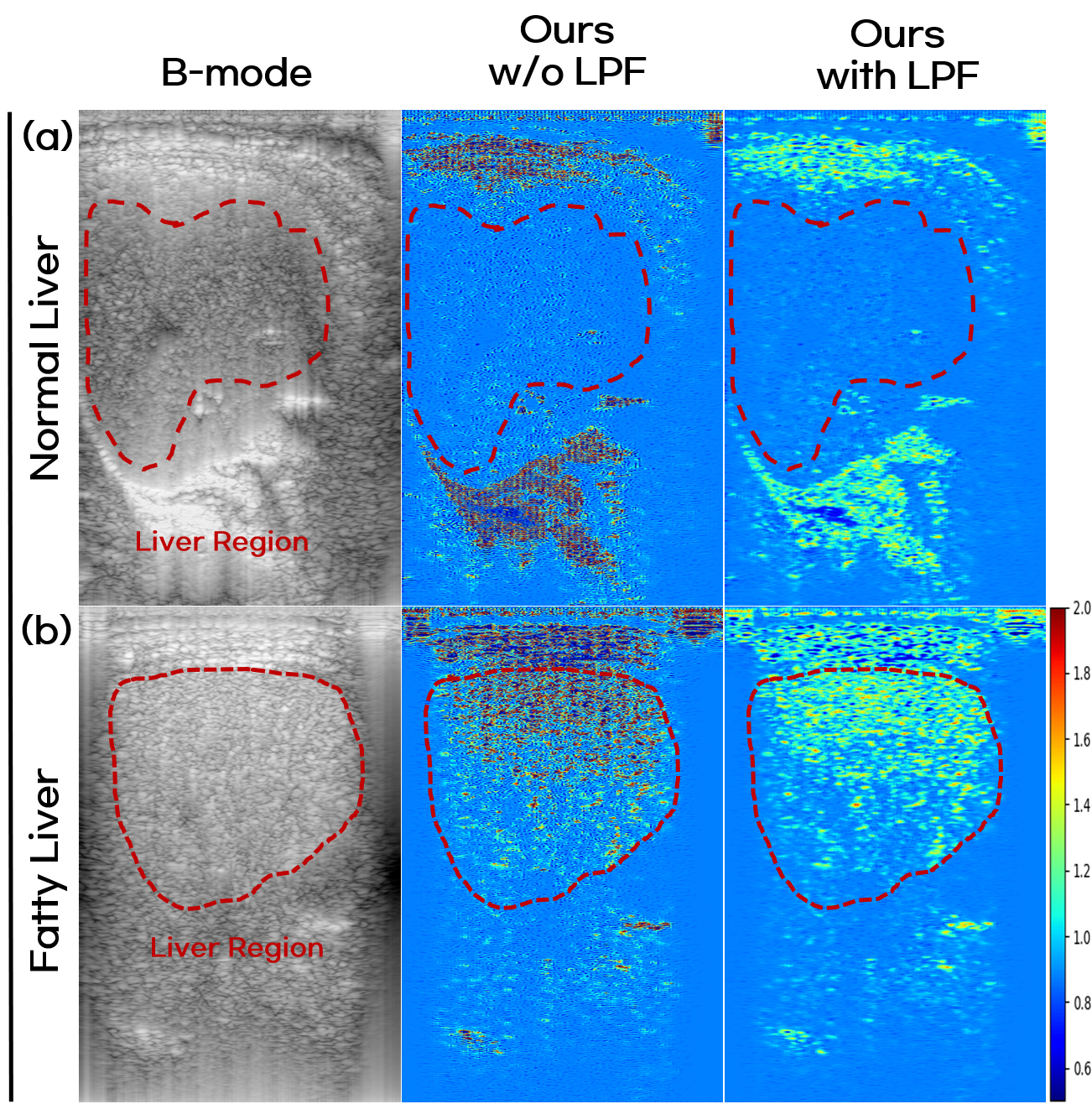}
\caption{Comparison of with and without low-pass filter. (a) represents normal liver cases with MRI-PDFF values of 4.2, and (b) represents fatty liver cases with MRI-PDFF values of 30.4. The red dashed lines indicate the liver region.}
\label{fig:lowpass}
\end{figure}

\section{Ablation Study and Further Analysis}

\subsection{Effect of the Low-pass filter}
To validate the rationale behind using the low-pass filter (LPF), we compared the qualitative results with and without the LPF, as shown in Figure~\ref{fig:lowpass}. Unlike baseline methods that apply an average filter at all sliding window steps, our method applies a single step of the low-pass filter. Therefore, the use of the LPF does not introduce any fairness issues. We found that the results without the LPF show minimal changes, primarily eliminating outliers while still providing a visual distinction between normal and fatty liver. After applying the LPF, we obtained smoother and more visually pleasing estimation results. Therefore, we adopt the low-pass filter in our method.

\begin{table}[!t]
    \caption{Computational complexity including training memory latency and inference speed.} 
    \centering
    \resizebox{1\linewidth}{!}{
    \begin{tabular}{lcccc}
          \toprule
         \cmidrule(lr){2-5} 
           Method  & \bf{Momentum} & \bf{MLE}  & \bf{WMC} & \bf{UNICORN}  \\    
        \cmidrule(l){1-1} \cmidrule(l){2-2}  \cmidrule(l){3-3}  \cmidrule(l){4-4} \cmidrule(l){5-5} 
         Train Memory (GB) & - & - & - & 13.68  \\
         Duration (sec/patient)  &  0.084 $\pm$ 0.003 & 33.79 $\pm$ 2.451  &  0.131 $\pm$ 0.005 &  0.004 $\pm$ 0.001  \\
    \bottomrule 

    \end{tabular}
    }
    \label{time}
\end{table}
\subsection{Comparison on Inference Speed }
To validate the efficiency of UNICORN, we compared the inference speed with other baseline methods, considering training memory latency, as shown in \cref{time}. Our framework does not require additional optimization steps, such as the MLE approach. Once the score model is trained, the inference requires only a single step, accelerated by GPU. The training memory requirement is 13.68 GB, which allows for the use of a single NVIDIA GeForce 3090 GPU. This capability facilitates local implementation even with limited GPU resources. Compared to the existing fastest approach, the momentum-based approach, our method is 20 times faster, making it more suitable for practical applications that require high frames per second (FPS) in ultrasound imaging.

\section{Discussion and Conclusion}

In this work, we introduced UNICORN, a novel framework for ultrasound Nakagami Imaging, addressing limitations of existing methods in visualizing tissue scattering of ultrasound waves. Our method incorporates the score function of a beamformed radiofrequency envelope and its signals, providing a closed-form estimator per pixel followed by low-pass filter adaptation. Unlike conventional methods, we demonstrate that our proposed method calculates the posterior mean, achieving the MMSE from a Bayesian perspective.

Furthermore, traditional methods typically employ a sliding square window to compute the Nakagami parameter, focusing primarily on tissue transitions within the window. This approach, however, does not allow for visualization of the parameter at the individual pixel level. In stark contrast, our pixel-level estimator method is designed to capture the Nakagami parameter directly at each pixel, marking a clear differentiation from prior techniques. 
Accordingly, a key advantage of our approach is its ability to provide both stable and high-resolution ultrasound Nakagami imaging, offering a more nuanced and accurate diagnostic tool compared to conventional methods.

In our simulation experiments, UNICORN demonstrated superior performance over traditional methods, achieving a significant margin of improvement. Furthermore, by applying real ultrasound envelope data, we rigorously validate UNICORN’s effectiveness through in-depth analysis, demonstrating its capability to not only distinguish normal and fatty liver but also differentiate between steatosis stages with high precision. We believe that our framework holds promise for various applications in tumor diagnosis and fat fraction estimation, paving the way for advancements in ultrasound imaging techniques.


\section{Acknowledgments}
The study received approval from the Institutional Review Board (IRB) of Samsung Medical Center (IRB No. SMC 2022-11-026). This work was supported by Samsung Medison Co., Ltd.

\appendix
\section{Proof of \cref{prop}} \label{proof1}
\begin{proof}
For a given Nakagmi distribution ~\cref{pd:naka} and equality ~\cref{key equation}, 
we have 
\begin{align}
   \nabla_r \log p_{R}(r)= (2m-1) \frac{1}{r} -\frac{2m}{\Omega}r 
\end{align}
Accordingly,
\begin{align}
  m\left(\frac{2}{r} - \frac{2r}{\Omega}\right) &= \frac{1}{r} + \nabla_r \log p_{R}(r) 
\end{align}
Furthermore, we have,
\begin{align}
 \hat \Omega = \mathbb{E}[R^2] 
\end{align}
Therefore, we have,
\begin{align}
   \hat m = \mathbb{E}[m|r] &= \frac{\frac{1}{r} + \nabla_r \log p_{R}(r) }{\left(\frac{2}{r} - \frac{2r}{\hat \Omega}\right)}, \quad \text{where} \; \hat \Omega = \mathbb{E}[R^2] 
\end{align}
This concluded the proof.
\end{proof}
\bibliographystyle{splncs04}
\bibliography{reference}

@inproceedings{larrue2011nakagami,
  title={Nakagami imaging with small windows},
  author={Larrue, Aymeric and Noble, J Alison},
  booktitle={2011 IEEE International Symposium on Biomedical Imaging: From Nano to Macro},
  pages={887--890},
  year={2011},
  organization={IEEE}
}

@article{kolar2004estimator,
  title={Estimator comparison of the Nakagami-m parameter and its application in echocardiography},
  author={Kolar, Radim and Jirik, Radovan and Jan, Jiri},
  journal={Radioengineering},
  volume={13},
  number={1},
  pages={8--12},
  year={2004},
  publisher={SPOLECNOST PRO RADIOELEKTRONICKE INZENYRSTVI CZECH TECHNICAL UNIVERSITY~…}
}

@incollection{destrempes2023review,
  title={Review of envelope statistics models for quantitative ultrasound imaging and tissue characterization},
  author={Destrempes, Fran{\c{c}}ois and Cloutier, Guy},
  booktitle={Quantitative ultrasound in soft tissues},
  pages={107--152},
  year={2023},
  publisher={Springer}
}

@article{kawano2013mechanisms,
  title={Mechanisms of hepatic triglyceride accumulation in non-alcoholic fatty liver disease},
  author={Kawano, Yuki and Cohen, David E},
  journal={Journal of gastroenterology},
  volume={48},
  pages={434--441},
  year={2013},
  publisher={Springer}
}

@article{adams2005recent,
  title={Recent concepts in non-alcoholic fatty liver disease},
  author={Adams, LA and Angulo, P},
  journal={Diabetic medicine},
  volume={22},
  number={9},
  pages={1129--1133},
  year={2005},
  publisher={Wiley Online Library}
}

@article{targher2010risk,
  title={Risk of cardiovascular disease in patients with nonalcoholic fatty liver disease},
  author={Targher, Giovanni and Day, Christopher P and Bonora, Enzo},
  journal={New England Journal of Medicine},
  volume={363},
  number={14},
  pages={1341--1350},
  year={2010},
  publisher={Mass Medical Soc}
}

@article{zoller2016nonalcoholic,
  title={Nonalcoholic fatty liver disease and hepatocellular carcinoma},
  author={Zoller, Heinz and Tilg, Herbert},
  journal={Metabolism},
  volume={65},
  number={8},
  pages={1151--1160},
  year={2016},
  publisher={Elsevier}
}

@article{mishra2012epidemiology,
  title={Epidemiology and natural history of non-alcoholic fatty liver disease},
  author={Mishra, Alita and Younossi, Zobair M},
  journal={Journal of clinical and experimental hepatology},
  volume={2},
  number={2},
  pages={135--144},
  year={2012},
  publisher={Elsevier}
}

@article{sumida2014limitations,
  title={Limitations of liver biopsy and non-invasive diagnostic tests for the diagnosis of nonalcoholic fatty liver disease/nonalcoholic steatohepatitis},
  author={Sumida, Yoshio and Nakajima, Atsushi and Itoh, Yoshito},
  journal={World journal of gastroenterology: WJG},
  volume={20},
  number={2},
  pages={475},
  year={2014},
  publisher={Baishideng Publishing Group Inc}
}

@article{ratziu2005sampling,
  title={Sampling variability of liver biopsy in nonalcoholic fatty liver disease},
  author={Ratziu, Vlad and Charlotte, Fr{\'e}d{\'e}ric and Heurtier, Agn{\`e}s and Gombert, Sophie and Giral, Philippe and Bruckert, Eric and Grimaldi, Andr{\'e} and Capron, Fr{\'e}d{\'e}rique and Poynard, Thierry and LIDO Study Group and others},
  journal={Gastroenterology},
  volume={128},
  number={7},
  pages={1898--1906},
  year={2005},
  publisher={Elsevier}
}

@article{kleiner2005design,
  title={Design and validation of a histological scoring system for nonalcoholic fatty liver disease},
  author={Kleiner, David E and Brunt, Elizabeth M and Van Natta, Mark and Behling, Cynthia and Contos, Melissa J and Cummings, Oscar W and Ferrell, Linda D and Liu, Yao-Chang and Torbenson, Michael S and Unalp-Arida, Aynur and others},
  journal={Hepatology},
  volume={41},
  number={6},
  pages={1313--1321},
  year={2005},
  publisher={Wiley Online Library}
}

@article{bravo2001liver,
  title={Liver biopsy},
  author={Bravo, Arturo A and Sheth, Sunil G and Chopra, Sanjiv},
  journal={New England Journal of Medicine},
  volume={344},
  number={7},
  pages={495--500},
  year={2001},
  publisher={Mass Medical Soc}
}

@article{ma2009imaging,
  title={Imaging-based quantification of hepatic fat: methods and clinical applications},
  author={Ma, Xiaozhou and Holalkere, Nagaraj-Setty and Mino-Kenudson, Mari and Hahn, Peter F and Sahani, Dushyant V},
  journal={Radiographics},
  volume={29},
  number={5},
  pages={1253--1277},
  year={2009},
  publisher={Radiological Society of North America}
}

@article{reeder2012proton,
  title={Proton density fat-fraction: a standardized MR-based biomarker of tissue fat concentration},
  author={Reeder, Scott B and Hu, Houchun H and Sirlin, Claude B},
  journal={Journal of magnetic resonance imaging: JMRI},
  volume={36},
  number={5},
  pages={1011},
  year={2012},
  publisher={NIH Public Access}
}

@article{schwenzer2009non,
  title={Non-invasive assessment and quantification of liver steatosis by ultrasound, computed tomography and magnetic resonance},
  author={Schwenzer, Nina F and Springer, Fabian and Schraml, Christina and Stefan, Norbert and Machann, J{\"u}rgen and Schick, Fritz},
  journal={Journal of hepatology},
  volume={51},
  number={3},
  pages={433--445},
  year={2009},
  publisher={Elsevier}
}

@article{tsui2016application,
  title={Application of ultrasound nakagami imaging for the diagnosis of fatty liver},
  author={Tsui, Po-Hsiang and Wan, Yung-Liang},
  journal={Journal of Medical Ultrasound},
  volume={24},
  number={2},
  pages={47--49},
  year={2016},
  publisher={No longer published by Elsevier}
}

@article{tsui2007imaging,
  title={Imaging local scatterer concentrations by the Nakagami statistical model},
  author={Tsui, Po-Hsiang and Chang, Chien-Cheng},
  journal={Ultrasound in medicine \& biology},
  volume={33},
  number={4},
  pages={608--619},
  year={2007},
  publisher={Elsevier}
}

@article{thijssen2008computer,
  title={Computer-aided B-mode ultrasound diagnosis of hepatic steatosis: a feasibility study},
  author={Thijssen, Johan M and Starke, Alexander and Weijers, Gert and Haudum, Alois and Herzog, Kathrin and Wohlsein, Peter and Rehage, Jurgen and De Korte, Chris L},
  journal={ieee transactions on ultrasonics, ferroelectrics, and frequency control},
  volume={55},
  number={6},
  pages={1343--1354},
  year={2008},
  publisher={IEEE}
}

@incollection{nakagami1960m,
  title={The m-distribution—A general formula of intensity distribution of rapid fading},
  author={Nakagami, Minoru},
  booktitle={Statistical methods in radio wave propagation},
  pages={3--36},
  year={1960},
  publisher={Elsevier}
}

@article{dutt1994ultrasound,
  title={Ultrasound echo envelope analysis using a homodyned K distribution signal model},
  author={Dutt, Vinayak and Greenleaf, James F},
  journal={Ultrasonic Imaging},
  volume={16},
  number={4},
  pages={265--287},
  year={1994},
  publisher={Elsevier}
}

@article{chan2021ultrasound,
  title={Ultrasound sample entropy imaging: A new approach for evaluating hepatic steatosis and fibrosis},
  author={Chan, Hsien-Jung and Zhou, Zhuhuang and Fang, Jui and Tai, Dar-In and Tseng, Jeng-Hwei and Lai, Ming-Wei and Hsieh, Bao-Yu and Yamaguchi, Tadashi and Tsui, Po-Hsiang},
  journal={IEEE Journal of Translational Engineering in Health and Medicine},
  volume={9},
  pages={1--12},
  year={2021},
  publisher={IEEE}
}

@article{sato2021fatty,
  title={Fatty liver evaluation with double-Nakagami model under low-resolution conditions},
  author={Sato, Yusuke and Tamura, Kazuki and Mori, Shohei and Tai, Dar-In and Tsui, Po-Hsiang and Yoshida, Kenji and Hirata, Shinnosuke and Maruyama, Hitoshi and Yamaguchi, Tadashi},
  journal={Japanese Journal of Applied Physics},
  volume={60},
  number={SD},
  pages={SDDE06},
  year={2021},
  publisher={IOP Publishing}
}

@article{lin2024clinical,
  title={Clinical performance of ultrasonic backscatter parametric and nonparametric statistics in detecting early hepatic steatosis},
  author={Lin, Chih-Hao and Ho, Ming-Chih and Lee, Po-Chu and Yang, Po-Jen and Jeng, Yung-Ming and Tsai, Jia-Huei and Chen, Chiung-Nien and Chen, Argon},
  journal={Ultrasonics},
  pages={107391},
  year={2024},
  publisher={Elsevier}
}

@inproceedings{lipton2014optimal,
  title={Optimal thresholding of classifiers to maximize F1 measure},
  author={Lipton, Zachary C and Elkan, Charles and Naryanaswamy, Balakrishnan},
  booktitle={Machine Learning and Knowledge Discovery in Databases: European Conference, ECML PKDD 2014, Nancy, France, September 15-19, 2014. Proceedings, Part II 14},
  pages={225--239},
  year={2014},
  organization={Springer}
}

@inproceedings{lim2020ar,
  title={AR-DAE: towards unbiased neural entropy gradient estimation},
  author={Lim, Jae Hyun and Courville, Aaron and Pal, Christopher and Huang, Chin-Wei},
  booktitle={International Conference on Machine Learning},
  pages={6061--6071},
  year={2020},
  organization={PMLR}
}

@article{shukla2019quantitative,
  title={Quantitative imaging biomarkers alliance (QIBA) recommendations for improved precision of DWI and DCE-MRI derived biomarkers in multicenter oncology trials},
  author={Shukla-Dave, Amita and Obuchowski, Nancy A and Chenevert, Thomas L and Jambawalikar, Sachin and Schwartz, Lawrence H and Malyarenko, Dariya and Huang, Wei and Noworolski, Susan M and Young, Robert J and Shiroishi, Mark S and others},
  journal={Journal of Magnetic Resonance Imaging},
  volume={49},
  number={7},
  pages={e101--e121},
  year={2019},
  publisher={Wiley Online Library}
}

@article{fang2020ultrasound,
  title={Ultrasound assessment of hepatic steatosis by using the double Nakagami distribution: a feasibility study},
  author={Fang, Feng and Fang, Jui and Li, Qiang and Tai, Dar-In and Wan, Yung-Liang and Tamura, Kazuki and Yamaguchi, Tadashi and Tsui, Po-Hsiang},
  journal={Diagnostics},
  volume={10},
  number={8},
  pages={557},
  year={2020},
  publisher={MDPI}
}

@article{tsui2010ultrasonic,
  title={Ultrasonic Nakagami imaging: a strategy to visualize the scatterer properties of benign and malignant breast tumors},
  author={Tsui, Po-Hsiang and Yeh, Chih-Kuang and Liao, Yin-Yin and Chang, Chien-Cheng and Kuo, Wen-Hung and Chang, King-Jen and Chen, Chiung-Nien},
  journal={Ultrasound in medicine \& biology},
  volume={36},
  number={2},
  pages={209--217},
  year={2010},
  publisher={Elsevier}
}

@article{ho2012using,
  title={Using ultrasound Nakagami imaging to assess liver fibrosis in rats},
  author={Ho, Ming-Chih and Lin, Jen-Jen and Shu, Yu-Chen and Chen, Chiung-Nien and Chang, King-Jen and Chang, Chien-Cheng and Tsui, Po-Hsiang},
  journal={Ultrasonics},
  volume={52},
  number={2},
  pages={215--222},
  year={2012},
  publisher={Elsevier}
}

@article{rangraz2014nakagami,
  title={Nakagami imaging for detecting thermal lesions induced by high-intensity focused ultrasound in tissue},
  author={Rangraz, Parisa and Behnam, Hamid and Tavakkoli, Jahan},
  journal={Proceedings of the Institution of Mechanical Engineers, Part H: Journal of Engineering in Medicine},
  volume={228},
  number={1},
  pages={19--26},
  year={2014},
  publisher={SAGE Publications Sage UK: London, England}
}

@article{wan2015effects,
  title={Effects of fatty infiltration in human livers on the backscattered statistics of ultrasound imaging},
  author={Wan, Yung-Liang and Tai, Dar-In and Ma, Hsiang-Yang and Chiang, Bing-Hao and Chen, Chin-Kuo and Tsui, Po-Hsiang},
  journal={Proceedings of the Institution of Mechanical Engineers, Part H: Journal of Engineering in Medicine},
  volume={229},
  number={6},
  pages={419--428},
  year={2015},
  publisher={SAGE Publications Sage UK: London, England}
}

@article{shankar2001classification,
  title={Classification of ultrasonic B-mode images of breast masses using Nakagami distribution},
  author={Shankar, P Mohana and Dumane, VA and Reid, John M and Genis, Vladimir and Forsberg, Flemming and Piccoli, Catherine W and Goldberg, Barry B},
  journal={IEEE transactions on ultrasonics, ferroelectrics, and frequency control},
  volume={48},
  number={2},
  pages={569--580},
  year={2001},
  publisher={IEEE}
}

@article{zhang2012feasibility,
  title={Feasibility of using Nakagami distribution in evaluating the formation of ultrasound-induced thermal lesions},
  author={Zhang, Siyuan and Zhou, Fanyu and Wan, Mingxi and Wei, Min and Fu, Quanyou and Wang, Xing and Wang, Supin},
  journal={The Journal of the Acoustical Society of America},
  volume={131},
  number={6},
  pages={4836--4844},
  year={2012},
  publisher={AIP Publishing}
}

@article{tsui2016acoustic,
  title={Acoustic structure quantification by using ultrasound Nakagami imaging for assessing liver fibrosis},
  author={Tsui, Po-Hsiang and Ho, Ming-Chih and Tai, Dar-In and Lin, Ying-Hsiu and Wang, Chiao-Yin and Ma, Hsiang-Yang},
  journal={Scientific reports},
  volume={6},
  number={1},
  pages={33075},
  year={2016},
  publisher={Nature Publishing Group UK London}
}

@article{zhou2018hepatic,
  title={Hepatic steatosis assessment with ultrasound small-window entropy imaging},
  author={Zhou, Zhuhuang and Tai, Dar-In and Wan, Yung-Liang and Tseng, Jeng-Hwei and Lin, Yi-Ru and Wu, Shuicai and Yang, Kuen-Cheh and Liao, Yin-Yin and Yeh, Chih-Kuang and Tsui, Po-Hsiang},
  journal={Ultrasound in medicine \& biology},
  volume={44},
  number={7},
  pages={1327--1340},
  year={2018},
  publisher={Elsevier}
}

@article{zhou2019hepatic,
  title={Hepatic steatosis assessment using quantitative ultrasound parametric imaging based on backscatter envelope statistics},
  author={Zhou, Zhuhuang and Zhang, Qiyu and Wu, Weiwei and Wu, Shuicai and Tsui, Po-Hsiang},
  journal={Applied Sciences},
  volume={9},
  number={4},
  pages={661},
  year={2019},
  publisher={MDPI}
}

@article{park2022quantitative,
  title={Quantitative evaluation of hepatic steatosis using advanced imaging techniques: focusing on new quantitative ultrasound techniques},
  author={Park, Junghoan and Lee, Jeong Min and Lee, Gunwoo and Jeon, Sun Kyung and Joo, Ijin},
  journal={Korean Journal of Radiology},
  volume={23},
  number={1},
  pages={13},
  year={2022},
  publisher={Korean Society of Radiology}
}

@article{zagzebski1993quantitative,
  title={Quantitative ultrasound imaging: in vivo results in normal liver},
  author={Zagzebski, JA and Lu, ZF and Yao, LX},
  journal={Ultrasonic imaging},
  volume={15},
  number={4},
  pages={335--351},
  year={1993},
  publisher={Elsevier}
}

@article{jang2023non,
  title={Non-invasive imaging methods to evaluate non-alcoholic fatty liver disease with fat quantification: a review},
  author={Jang, Weon and Song, Ji Soo},
  journal={Diagnostics},
  volume={13},
  number={11},
  pages={1852},
  year={2023},
  publisher={MDPI}
}

@article{ferraioli2021quantification,
  title={Quantification of liver fat content with ultrasound: a WFUMB position paper},
  author={Ferraioli, Giovanna and Berzigotti, Annalisa and Barr, Richard G and Choi, Byung I and Cui, Xin Wu and Dong, Yi and Gilja, Odd Helge and Lee, Jae Young and Lee, Dong Ho and Moriyasu, Fuminori and others},
  journal={Ultrasound in medicine \& biology},
  volume={47},
  number={10},
  pages={2803--2820},
  year={2021},
  publisher={Elsevier}
}

@article{ghoshal2012ex,
  title={Ex vivo study of quantitative ultrasound parameters in fatty rabbit livers},
  author={Ghoshal, Goutam and Lavarello, Roberto J and Kemmerer, Jeremy P and Miller, Rita J and Oelze, Michael L},
  journal={Ultrasound in medicine \& biology},
  volume={38},
  number={12},
  pages={2238--2248},
  year={2012},
  publisher={Elsevier}
}

@article{rou2024assessment,
  title={Assessment of Hepatic Steatosis Using Ultrasound-Based Techniques: Focus on Fat Quantification},
  author={Rou, Woo Sun and Rou, Woo Sun},
  journal={Clinical Ultrasound},
  volume={9},
  number={1},
  pages={1--17},
  year={2024},
  publisher={The Korean Association of Clinical Ultrasound}
}

@article{li2024correlation,
  title={The correlation between hepatic controlled attenuation parameter (CAP) value and insulin resistance (IR) was stronger than that between body mass index, visceral fat area and IR},
  author={Li, Zhouhuiling and Liu, Renjiao and Gao, Xinying and Hou, Dangmin and Leng, Mingxin and Zhang, Yanju and Du, Meiyang and Zhang, Shi and Li, Chunjun},
  journal={Diabetology \& Metabolic Syndrome},
  volume={16},
  number={1},
  pages={153},
  year={2024},
  publisher={Springer}
}

@article{han2017nakagami,
  title={Nakagami--m parametric imaging for atherosclerotic plaque characterization using the coarse-to-fine method},
  author={Han, Meng and Wan, Jinjin and Zhao, Yongfeng and Zhou, Xiaodong and Wan, Mingxi},
  journal={Ultrasound in Medicine \& Biology},
  volume={43},
  number={6},
  pages={1275--1289},
  year={2017},
  publisher={Elsevier}
}

@article{zhou2018three,
  title={Three-dimensional visualization of ultrasound backscatter statistics by window-modulated compounding Nakagami imaging},
  author={Zhou, Zhuhuang and Wu, Shuicai and Lin, Man-Yen and Fang, Jui and Liu, Hao-Li and Tsui, Po-Hsiang},
  journal={Ultrasonic Imaging},
  volume={40},
  number={3},
  pages={171--189},
  year={2018},
  publisher={SAGE Publications Sage CA: Los Angeles, CA}
}

@article{tsui2014window,
  title={Window-modulated compounding Nakagami imaging for ultrasound tissue characterization},
  author={Tsui, Po-Hsiang and Ma, Hsiang-Yang and Zhou, Zhuhuang and Ho, Ming-Chih and Lee, Yu-Hsin},
  journal={Ultrasonics},
  volume={54},
  number={6},
  pages={1448--1459},
  year={2014},
  publisher={Elsevier}
}

@article{al2020dataset,
  title={Dataset of breast ultrasound images},
  author={Al-Dhabyani, Walid and Gomaa, Mohammed and Khaled, Hussien and Fahmy, Aly},
  journal={Data in brief},
  volume={28},
  pages={104863},
  year={2020},
  publisher={Elsevier}
}

@article{ho2013relationship,
  title={Relationship between ultrasound backscattered statistics and the concentration of fatty droplets in livers: an animal study},
  author={Ho, Ming-Chih and Lee, Yu-Hsin and Jeng, Yung-Ming and Chen, Chiung-Nien and Chang, King-Jen and Tsui, Po-Hsiang},
  journal={PLoS One},
  volume={8},
  number={5},
  pages={e63543},
  year={2013},
  publisher={Public Library of Science San Francisco, USA}
}

@article{lin2015noninvasive,
  title={Noninvasive diagnosis of nonalcoholic fatty liver disease and quantification of liver fat using a new quantitative ultrasound technique},
  author={Lin, Steven C and Heba, Elhamy and Wolfson, Tanya and Ang, Brandon and Gamst, Anthony and Han, Aiguo and Erdman Jr, John W and O’Brien Jr, William D and Andre, Michael P and Sirlin, Claude B and others},
  journal={Clinical Gastroenterology and Hepatology},
  volume={13},
  number={7},
  pages={1337--1345},
  year={2015},
  publisher={Elsevier}
}

@article{insana1990describing,
  title={Describing small-scale structure in random media using pulse-echo ultrasound},
  author={Insana, Michael F and Wagner, Robert F and Brown, David G and Hall, Timothy J},
  journal={The Journal of the Acoustical Society of America},
  volume={87},
  number={1},
  pages={179--192},
  year={1990},
  publisher={Acoustical Society of America}
}

@article{cheng2001maximum,
  title={Maximum-likelihood based estimation of the Nakagami m parameter},
  author={Cheng, Julian and Beaulieu, Norman C},
  journal={IEEE Communications letters},
  volume={5},
  number={3},
  pages={101--103},
  year={2001},
  publisher={IEEE}
}

@article{shankar1995model,
  title={A model for ultrasonic scattering from tissues based on the K distribution},
  author={Shankar, P Mohana},
  journal={Physics in Medicine \& Biology},
  volume={40},
  number={10},
  pages={1633},
  year={1995},
  publisher={IOP Publishing}
}

@article{bamber1981acoustic,
  title={Acoustic properties of normal and cancerous human liver—I. Dependence on pathological condition},
  author={Bamber, JC and Hill, CR},
  journal={Ultrasound in medicine \& biology},
  volume={7},
  number={2},
  pages={121--133},
  year={1981},
  publisher={Elsevier}
}

@article{oelze2016review,
  title={Review of quantitative ultrasound: Envelope statistics and backscatter coefficient imaging and contributions to diagnostic ultrasound},
  author={Oelze, Michael L and Mamou, Jonathan},
  journal={IEEE transactions on ultrasonics, ferroelectrics, and frequency control},
  volume={63},
  number={2},
  pages={336--351},
  year={2016},
  publisher={IEEE}
}

@article{bigelow2005estimation,
  title={Estimation of total attenuation and scatterer size from backscattered ultrasound waveforms},
  author={Bigelow, Timothy A and Oelze, Michael L and O’Brien Jr, William D},
  journal={The Journal of the Acoustical Society of America},
  volume={117},
  number={3},
  pages={1431--1439},
  year={2005},
  publisher={Acoustical Society of America}
}

@article{kim2021noise2score,
  title={Noise2{S}core: {T}weedie’s approach to self-supervised image denoising without clean images},
  author={Kim, Kwanyoung and Ye, Jong Chul},
  journal={Advances in Neural Information Processing Systems},
  volume={34},
  pages={864--874},
  year={2021}
}

@inproceedings{kim2022noise,
  title={Noise distribution adaptive self-supervised image denoising using {T}weedie distribution and score matching},
  author={Kim, Kwanyoung and Kwon, Taesung and Ye, Jong Chul},
  booktitle={Proceedings of the IEEE/CVF Conference on Computer Vision and Pattern Recognition},
  pages={2008--2016},
  year={2022}
}

\end{document}